\def\ps@headings{%
\def\@oddhead{\mbox{}\scriptsize\rightmark \hfil \thepage}%
\def\@evenhead{\scriptsize\thepage \hfil \leftmark\mbox{}}%
\def\@oddfoot{}%
\def\@evenfoot{}}
\makeatother \pagestyle{headings}
\newfont{\bbb}{msbm10 scaled 500}
\newfont{\bb}{msbm10 scaled 1100}
\newcommand{\argmax}{\operatornamewithlimits{argmax}}
\newtheorem{theorem}{Theorem}
\newtheorem{lemma}{Lemma}
\author
{Mehmet Karaca}
\title{Scheduling and Dynamic Pilot Allocation For Massive MIMO with Varying Traffic
\thanks{
Mehmet Karaca is with the Department of Electrical-Electronics Engineering, TED University, 06420-Cankaya,
Ankara/Turkey. Email: mehmet.karaca@tedu.edu.tr,
}
}
\begin{document}
\maketitle


\begin{abstract}
In this paper, we consider the problem of joint user scheduling and dynamic pilot allocation in a Time-Division Duplex (TDD) based  Massive MIMO network under varying traffic condition.  One of the main problems with Massive MIMO systems is that the number of available orthogonal pilot signals is limited, and the dynamic allocation of these signals to different users is crucially needed to utilize the full benefit of these systems. In addition, pilot signals are radio resource control (RRC) configured in practice, and hence the frequent reconfiguration  causes high signaling overhead and is costly. Using  Lyapunov optimization framework, we develop an optimal algorithm that first  assigns pilots  dynamically based on  queue sizes and the channel conditions of users as well as the reconfiguration cost at large time-scale. Then, it  schedules users on a small-time scale.  Numerical results show the efficacy of our algorithm and demonstrate that pilots do not need to be configured frequently at the expense of increased queue delay.  
\end{abstract}

\begin{IEEEkeywords}
Massive MIMO, Pilot Allocation, Lyapunov Optimization, random traffic, queue stability.
\end{IEEEkeywords}

\section{Introduction}
\label{sec:intro} 
\IEEEPARstart{T}{he} Massive MIMO is one of the core technologies in future 5G  wireless  systems in which base stations (BSs) operate with a large number  of antennas. Using many antennas at the BS provides very high beamforming gain and the capability of serving multiple users simultaneously via spacial multiplexing at the same time and frequency resources. Consequently, enormous spectral efficiency can be achieved~\cite{TMbook},~\cite{TMMag}.

One of the main challenges  for a TDD based Massive MIMO system is  the accurate estimation of the channel of users. The estimation is realized through special uplink signals called as \textit{pilot} signal or \textit{Sounding Reference Signal} (SRS)\footnote{ In the rest of the paper, we use these terms interchangeably.}  allocated to users. Multiplexing and   beamforming gain can be achieved  only by the users with pilot signals, and the gain  and the system performance degrade when those signals are not carefully assigned to the users.  For instance,  the users with high amount of traffic may be preferred to have pilot signals to maximize the network throughput. On the other hand, the practical problems that obstruct the pilot allocation must also be taken into account. Specifically, pilot signals are configured through RRC signaling~\cite{book} and the frequent reconfiguration of the pilots can cause an intolerable signaling overhead  in practice. Therefore, in addition to a smart SRS allocation, the minimization of the frequency of SRS configuration needs  attention.

Although the  impact of reusing the same pilot signals on different cells (i.e., pilot contamination) has been well investigated in the literature, the allocation of these signals with the consideration of varying user's traffic and channel conditions  has received little attention. In~\cite{7878690},~\cite{7852217},~\cite{inproceedings} and~\cite{8377316} the pilot allocation and scheduling are considered without the impact of the network traffic. In~\cite{8949102}, the pilot allocation is done for only special messages and the results cannot be generalized.

Our contributions are summarized as follows: i-) We formulate the problem of the scheduling and pilot allocation for Massive MIMO networks with the associated signaling cost as a stochastic optimization problem   under Lyapunov optimization framework; ii-) We develop an optimal algorithm that operates at two different time-scale and does not need the future knowledge of the system. We also derive analytical bounds on the performance of the optimal algorithm in terms of average queue size and the signaling cost; iii-) We implement a realistic network setting,  and demonstrate  the performance of our algorithm and depict the tradeoff between the signaling cost and the average queue size.

\section{System Model and Problem Formulation}
\label{sec:model} We consider a Massive MIMO capable cellular network where there is a BS with $M $ antennas serving  $N$ users. Let $ \mathcal{N}$ be the set of users with single antenna. The BS can transmit simultaneously  up to $K$ users via its Multi-User (MU)-MIMO capability and $K< M$. The system schedules users on time-slot fashion at each $\tau$ (e.g., regular scheduling decision that LTE performs at every 1 ms) and let $k(\tau)$ be the number of users  served simultaneously at time $\tau$, $1 \leq k(\tau) \leq K$.  In this case, the transmit power for a scheduled user $n$ is $p_n(\tau) = \frac{P_{tot}}{k(\tau)}$, and $P_{tot}$ is the total transmit power. 
The transmission rate for a user $n$ at time slot $\tau$, $R_n(\tau) \leq R_{max}$ for all $n$ and $\tau$, is given by,
\begin{align*}
R_n(\tau)=\textrm{B}\gamma\log_2\left(1 +  \frac{p_n(\tau)\mid\textbf{h}_n^\dagger(\tau)\textbf{w}_n(\tau)\mid^2  }{  \sum_{\substack{k=1 \\ k\neq n}}^{k(\tau)} p_k(\tau)\mid\textbf{h}_n^\dagger(\tau)\textbf{w}_k(\tau)\mid^2 + \sigma^2 }   \right), 
\end{align*}
where $\textbf{h}_n^\dagger(\tau) \in \mathbb{C}^{M}$ is channel vector of user $n$ in the downlink direction, and $s_n(\tau)$ represents the complex transmit symbol for user $n$. Also, $\textbf{w}_n(\tau) \in \mathbb{C}^{N}$ is the normalized precoding vector of  user $n$ and $\sigma_n^2(\tau)$ is zero mean complex Gaussian additive noise with power $\sigma^2$. Furthermore, $\textrm{B}$ is the system bandwidth and $\gamma$ is the fraction of the total time/frequency resource used for data transmission, and the $(1-\gamma)$ fraction is donated for obtaining SRS.  We note that $\gamma$ depends on the length of the coherence block and pilot signals~\cite{TMbook}.

Let $I_n(\tau)$ be the scheduling decision given for user $n \in \mathcal{N}$ at time slot $\tau$. If user $n$ is scheduled then $I_n(\tau)=1$, else $I_n(\tau)=0$.  We assume there are $P$ number of pilot signals that can be allocated  among $N$ users,  where $K < P < N$.  The decision for SRS allocation  is taken at a larger time-scale denoted as $T > \tau$.   In every time slot of the form $t =lT$ where $l = 1,2,....$, the decision denoted as $I_n^s(t)$ is taken to decide whether  user $n$,  $n \in \mathcal{N}$, should have SRS or not. if $I_n^s(t)=1$, the system allocates  SRS to user $n$ and it can benefit from the multiplexing and beam forming gain between  $ [t,t+T-1]$. If $I_n^s(t)=0$, user $n$ cannot have  a SRS till the next time $t=(k+1)T$. After deciding  on $I_n^s(t)$ for every user,  a SRS configuration flag is set. If $I_n^s(t) = I_n^s((k-1)T)$ for all users (i.e. the same SRS allocation as previous time), then the decision on the  reconfiguration of SRS denoted as  $I^{s}(t)$ is set to 0,  otherwise $I^{s}(t)=1$.  That is to say  when  $I_n^s(t)$ is given for all users,  $I^s(t)$ is completely determined.

At each time slot $\tau$,  data randomly arrives to the queue of each users. Let $A_n(\tau)$ be the amount of data (bits or packets) arriving into
the queue of user $n$ at time slot $\tau$. We assume that $A_n(\tau)$ is a
stationary process and it is independent across users and time
slots, and $A_n(\tau) \leq A_{max}$ for all $n$ and $\tau$. We denote the arrival rate vector as $\boldsymbol
\lambda=(\lambda_1,\lambda_2,\cdots,\lambda_N)$, where $\lambda_n =
{\mathbb E}[A_n(\tau)]$. Let $\boldsymbol
Q(\tau)=(Q_1(\tau),Q_2(\tau),\cdots,Q_N(\tau))$ denote the vector of queue
sizes, where $Q_n(\tau)$ is  the queue length of user $n$ at time slot
$\tau$. A queue is strongly stable if $\limsup_{t\rightarrow \infty}\frac{1}{t} \sum_{\tau=0}^{t-1}\mathbb{E}(Q_n(\tau)) < \infty $. Moreover, if every queue in the network is stable then the network
is called stable. The dynamics of the queue of user $n$ is
\begin{equation}
Q_n(\tau+1)=\max[Q_n(\tau)- I_n(\tau)R_{n}(\tau) , 0] + A_n(\tau).  \label{eq:queuelength}
\end{equation}
Let $\Lambda$ denote the capacity region of the system, which is the largest possible set of rates $\boldsymbol \lambda$ that can be supported by a joint scheduling and SRS allocation algorithm with ensuring the network stability.


We recall that configuring SRS allocation frequently (i.e.,  at every $T$) causes  signaling overhead and costly but  the SRS allocation should be also sufficiently adaptive and dynamic. Let $C$ be the cost when SRS allocation is reconfigured, i.e., $I^s(t)=1$.  Then, the average cost is given $C_{avg} = \limsup_{t\rightarrow \infty}\frac{1}{t}
\sum_{l=0}^{t-1} \mathbb{E} [C I^s(l)] $. 
The control decisions of the system are $c_n(t,\tau) = [I_n^s(t), I_n(\tau)]$ for user $n$. Let $ \mathcal{C}(t,\tau)$ be the set of all possible
control decisions. We consider the following optimization problem:
\begin{align}
&\min \quad C_{avg} \\
\text{s.t.:1)}\  &\text{Netwrok stability}\\
\text{2)}\ & c_n(t,\tau) \in \mathcal{C}(t,\tau), \quad \forall n, t, \tau
\label{eq:problem}
\end{align}
The problem (2)-(3)-(4) aims at minimizing the average cost by taking the scheduling and SRS allocation decisions optimally. The problem constitutes a stochastic optimization
problem and we next propose a solution based on Lyapunov
optimization technique.
\section{Joint SRS Configuration, Allocation and Scheduling}
In our work, we use Lyapunov drift and optimization tools
~\cite{Georgiadis:Resource06}. The advantage
of this tool is the ability to deal with performance optimization
and queue stability problems simultaneously in a unified framework. We first define
quadratic Lyapunov function as $L(\mathbf{Q}(t))\triangleq \frac{1}{2} \sum_{n=1}^N(Q_n^2(t))$ measuring
the total queue size in the system. We then define the conditional T-slot Lyapunov drift that is the expected variation in the Lyapunov function over $T$ slots as follows:
\begin{equation}
\Delta_T(t)\triangleq \mathbb{E}
\left[L(\mathbf{Q}(t+T))-L(\mathbf{Q}(t))|\mathbf{Q}(t) \right].  \label{eq:drift}
\end{equation}
The  Lyapunov optimization tool allows us to minimize the drift and  optimize a given objective simultaneously~\cite{Georgiadis:Resource06}.  We add our system cost $V \mathbb{E} [I^s(t) | (\mathbf{Q}]$ (i.e., $T$ slot drift-plus-penalty ) to \eqref{eq:drift}.
\begin{equation}
\Delta_T(t) + V \mathbb{E} [I^s(t) | (\mathbf{Q(t)}]  \label{eq:driftpluspen},
\end{equation}
where $V$ is system parameter that characterizes a tradeoff between performance
optimization and delay in the queues. According to the Lyapunov optimization theory, the problem (2)-(3)-(4) can be reinterpreted as the minimization of \eqref{eq:driftpluspen} which can be done by  first deriving an upper bound  for \eqref{eq:driftpluspen} in the following Lemma.
\begin{lemma}
\label{lemma:1}
Given $V >1$, and at time $t=lT$,  for any feasible decision, we have
\begin{align}
\Delta_T(t) & + V \mathbb{E} \left[ I^s(t) | (\mathbf{Q(t)}\right]  \nonumber\\
&\leq   B_1  +\mathbb{E} \left[\sum_{\tau = t}^{t +T-1} \sum_{n=1}^{N} Q_n(\tau)A_n(\tau)   | \mathbf{Q(t)}\right] \nonumber\\
& - \mathbb{E} \left[\sum_{\tau = t}^{t +T-1} \sum_{n=1}^{N} Q_n(\tau) R_n(\tau) -CVI^s(t)   | \mathbf{Q(t)} \right]\label{eq:bound2}
\end{align}
where $B_1= \frac{ NT (R_{max}^2 + A_{max}^2)}{2}$.
\end{lemma}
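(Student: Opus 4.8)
\emph{Proof proposal.} The plan is to reduce the $T$-slot drift-plus-penalty to a telescoping sum of one-slot quadratic drifts, bound each one-slot term by a constant plus a ``weighted arrivals minus weighted service'' expression, and then add the configuration penalty to both sides. The starting point is the elementary inequality: for all reals $Q\ge 0$, $a\ge 0$, $b\ge 0$,
\[
\bigl(\max[Q-b,\,0]+a\bigr)^2 \;\le\; Q^2 + a^2 + b^2 - 2Q(b-a),
\]
which is checked by splitting into the cases $Q\ge b$ (where the left side equals $Q^2+a^2+b^2-2Q(b-a)-2ab$ and $2ab\ge 0$) and $Q<b$ (where it reduces to $0\le(Q-b)^2+2Qa$).

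First I would apply this inequality to the queue recursion \eqref{eq:queuelength} with $b=I_n(\tau)R_n(\tau)$ and $a=A_n(\tau)$. Since $I_n(\tau)\in\{0,1\}$, $R_n(\tau)\le R_{max}$, and $A_n(\tau)\le A_{max}$, we have $b^2\le R_{max}^2$ and $a^2\le A_{max}^2$, so
\[
\tfrac12\bigl(Q_n(\tau+1)^2-Q_n(\tau)^2\bigr)\le\tfrac12\bigl(R_{max}^2+A_{max}^2\bigr)-Q_n(\tau)\bigl(I_n(\tau)R_n(\tau)-A_n(\tau)\bigr).
\]
Summing over $n=1,\dots,N$ gives a one-slot bound on $L(\mathbf{Q}(\tau+1))-L(\mathbf{Q}(\tau))$ with additive constant $\tfrac{N}{2}(R_{max}^2+A_{max}^2)$; summing that over $\tau=t,\dots,t+T-1$ telescopes the left side to $L(\mathbf{Q}(t+T))-L(\mathbf{Q}(t))$ and accumulates the constant to $B_1=\tfrac{NT}{2}(R_{max}^2+A_{max}^2)$. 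Taking $\mathbb{E}[\,\cdot\mid\mathbf{Q}(t)]$ of both sides produces $\Delta_T(t)$ on the left and the two double-sum expectations on the right (with $I_n(\tau)R_n(\tau)$ written as $R_n(\tau)$ under the paper's convention that $R_n(\tau)$ is the delivered rate), and adding the $V$-weighted expected reconfiguration cost $\mathbb{E}[\,CVI^s(t)\mid\mathbf{Q}(t)]$ to both sides yields exactly \eqref{eq:bound2}. The hypothesis $V>1$ plays no role in this bound and is only carried forward for the later analysis.

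The argument is entirely routine, so I do not expect a real obstacle; the one point that needs care is that $I_n(\tau)$, $R_n(\tau)$, $A_n(\tau)$, and therefore the queue lengths $Q_n(\tau)$ for every $\tau$ strictly inside the frame $(t,t+T)$, are still random given $\mathbf{Q}(t)$. Hence the double sum must be kept inside the single conditional expectation $\mathbb{E}[\,\cdot\mid\mathbf{Q}(t)]$ throughout, rather than being evaluated slot by slot; this is ensured by performing all the deterministic per-slot manipulations and the telescoping first, and only then applying the conditional expectation once at the end.
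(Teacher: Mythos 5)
Your proposal is correct and follows essentially the same route as the paper's own proof: the standard one-slot quadratic bound $Q_n^2(\tau+1)-Q_n^2(\tau)\le R_n^2(\tau)+A_n^2(\tau)-2Q_n(\tau)[R_n(\tau)-A_n(\tau)]$, telescoped over the $T$-slot frame, bounded via $R_{max}$ and $A_{max}$ to produce $B_1$, followed by one conditional expectation given $\mathbf{Q}(t)$ and the addition of the $I^s(t)$ penalty term to both sides. The only differences are cosmetic—you verify the elementary max-inequality explicitly and bound $R_n^2,A_n^2$ per slot before summing rather than after—and your closing remark about the $C$-versus-$V$ bookkeeping in the penalty term mirrors the same looseness present in the paper itself.
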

\begin{proof}
The proof is given in Appendix \ref{sec:lemma1}.
\end{proof}

Now, our aim  is to find a method that minimizes  the right hand side of \eqref{eq:bound2}, and this is realized by maximizing the following term in (8) given in the following problem.

\textbf{Opt 1:} Maximize over $c_n(t,\tau) \in \mathcal{C}(t,\tau),  \forall n, t, \tau$:
\begin{align}
\mathbb{E} \left[\sum_{\tau = t}^{t +T-1} \sum_{n=1}^{N} Q_n(\tau) [R_n(\tau)-A_n(\tau)] -CVI^s(t)  | \mathbf{Q(t)}  \right]
\end{align}

It is clear that the solution of Opt 1 requires the prior knowledge of the future queue sizes and the data rates which depends on the future channel conditions over $[t, t+T-1]$,  and this knowledge cannot be obtained in practice. In order to overcome this issue, first we follow the idea in~\cite{6810886} and approximate  the future queue sizes as the current observation, i.e., $Q_n(\tau) = Q_n(t)$ for all $\tau \in [t, t+T-1]$ and $n \in \mathcal{N}$. Then, we obtain a looser but more relaxed bound as follows.
\begin{lemma}
	\label{lemma:2}
	Given $V >1$, and at time $t=lT$,  for any feasible decision, we have the following bound,
	\begin{align}
	\Delta_T(t) & + V \mathbb{E} \left[ I^s(t) | (\mathbf{Q(t)}\right]  \nonumber\\
	&\leq   B_2  + \mathbb{E} \left[\sum_{\tau = t}^{t +T-1} \sum_{n=1}^{N} Q_n(t)A_n(\tau) | \mathbf{Q(t)} \right] \nonumber\\
	& - \underbrace{\mathbb{E} \left[\sum_{\tau = t}^{t +T-1} \sum_{n=1}^{N} Q_n(t) R_n(\tau) -CVI^s(t)   | \mathbf{Q(t)} \right]}_{\textit{Third  Term}}\label{eq:bound3}
	\end{align}
	where $B_2= \frac{ NT^2 (R_{max}^2 + A_{max}^2)}{2}$.
\end{lemma}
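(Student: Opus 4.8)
The plan is to obtain Lemma~\ref{lemma:2} directly from Lemma~\ref{lemma:1} by controlling how far each queue can drift away from its value at $t=lT$ over the window $[t,t+T-1]$. Comparing \eqref{eq:bound2} and \eqref{eq:bound3}, the only changes are that every $Q_n(\tau)$ inside the arrival sum and the rate sum is replaced by $Q_n(t)$, while $B_1$ is enlarged to $B_2$ and the term $-CVI^s(t)$ is left intact. Hence it suffices to bound, sample-path-wise and then in conditional expectation, the two discrepancies
\[
D_A \;=\; \sum_{\tau=t}^{t+T-1}\sum_{n=1}^{N}\bigl(Q_n(\tau)-Q_n(t)\bigr)A_n(\tau),
\qquad
D_R \;=\; \sum_{\tau=t}^{t+T-1}\sum_{n=1}^{N}\bigl(Q_n(t)-Q_n(\tau)\bigr)R_n(\tau),
\]
and to check that $B_1 + \mathbb{E}[\,D_A + D_R \mid \mathbf{Q}(t)\,] \le B_2$ for the stated constants.

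The key observation is that the recursion \eqref{eq:queuelength} gives two one-sided, deterministic drift bounds: a queue grows by at most $A_n(\tau)\le A_{max}$ per slot, so $Q_n(\tau)\le Q_n(t)+(\tau-t)A_{max}$ for all $\tau\ge t$; and it shrinks by at most $I_n(\tau)R_n(\tau)\le R_{max}$ per slot, so $Q_n(\tau)\ge Q_n(t)-(\tau-t)R_{max}$. Applying the first bound inside $D_A$ (legitimate since $A_n(\tau)\ge 0$) together with $A_n(\tau)\le A_{max}$ makes each summand $\le(\tau-t)A_{max}^2$; applying the second inside $D_R$ together with $R_n(\tau)\le R_{max}$ makes each summand $\le(\tau-t)R_{max}^2$. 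Summing the arithmetic progression $\sum_{\tau=t}^{t+T-1}(\tau-t)=\tfrac{T(T-1)}{2}$ over the $N$ users yields
\[
D_A + D_R \;\le\; \frac{NT(T-1)}{2}\bigl(R_{max}^2+A_{max}^2\bigr),
\]
and, being a deterministic constant, this estimate is unchanged by the conditional expectation.

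Substituting back into \eqref{eq:bound2} then replaces $\sum_{\tau,n}Q_n(\tau)A_n(\tau)$ by $\sum_{\tau,n}Q_n(t)A_n(\tau)$ and $\sum_{\tau,n}Q_n(\tau)R_n(\tau)$ by $\sum_{\tau,n}Q_n(t)R_n(\tau)$, at the cost of adding $\tfrac{NT(T-1)}{2}(R_{max}^2+A_{max}^2)$ to $B_1=\tfrac{NT}{2}(R_{max}^2+A_{max}^2)$; the two constants combine to exactly $B_2=\tfrac{NT^2}{2}(R_{max}^2+A_{max}^2)$, which is \eqref{eq:bound3}.

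The only point that needs care — and the only place where a careless estimate would overshoot the stated $B_2$ — is to use the \emph{correct one-sided} drift bound in each discrepancy: $Q_n(\tau)$ bounded above by arrivals alone in $D_A$, and bounded below by departures alone in $D_R$. The cruder symmetric bound $|Q_n(\tau)-Q_n(t)|\le(\tau-t)\max(R_{max},A_{max})$ would work but would pick up an extra multiplicative factor and fail to land on the clean constant $B_2$, so the asymmetric treatment is what makes the arithmetic close exactly.
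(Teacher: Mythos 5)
Your proposal is correct and follows essentially the same route as the paper: the paper's (sketched) proof likewise applies the one-sided drift bounds $Q_n(t)-(\tau-t)R_{max} \leq Q_n(\tau) \leq Q_n(t)+(\tau-t)A_{max}$ to the right-hand side of \eqref{eq:bound2}, using the upper bound in the arrival term and the lower bound in the rate term. Your worked-out accounting, with $\sum_{\tau=t}^{t+T-1}(\tau-t)=\tfrac{T(T-1)}{2}$ combining with $B_1$ to give exactly $B_2=\tfrac{NT^2(R_{max}^2+A_{max}^2)}{2}$, simply fills in the details the paper omits.
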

\begin{proof}
The proof uses the fact that for every $\tau \in [t,t+T-1]$
		\begin{align}
          Q_n(t) - (\tau-t)R_{max} \leq Q_n(\tau) \leq Q_n(t) + (\tau-t)A_{max} \notag
     	\end{align}
and uses these upper and lower bounds for $Q_n(\tau) $ on the R.H.S of \eqref{eq:bound2} and the rest of the proof is	similar to the proof of  Lemma 1 and omitted here.
\end{proof}
Lemma 2 reveals  that now the optimal control actions can be taken by maximizing the third term in the R.H.S of \eqref{eq:bound3},  which yields the optimal solution but with a higher average queue delay. However, we still need  the future channel information (i.e., transmission rates ) to maximize the third term in the R.H.S of~\eqref{eq:bound3} optimally. Here, we exploit one of  key benefits of a Massive MIMO system: under certain condition (i.e., Rayleigh fading channel) the fluctuation over the transmission rates due to small scale fading can be neglected and it only depends on the large-scale fading such as  path-loss and shadowing. This is known as \textit{channel hardening effect}~\cite{7880691} that occurs when the number of antennas at the BS is sufficiency large, which is the case for Massive MIMO systems. Hence, the user rates\footnote{ The large-scale fading can be measured by the BS over a longer time interval.}  become nearly deterministic and simplifies the scheduling and resource allocation problem. This channel characteristic and the result in Lemma 2 reduce  Opt 1 to the following simple optimization problem:

\textbf{Opt 2:}
\begin{equation*}
\max_{I_n^s(t),I_n(t)}  \left[ \sum_{n=1}^{N} Q_n(t) R_n(t) -\frac{CVI^s(t)}{T}  \right]
\label{eq:obj}
\end{equation*}
In Opt 2, $Q_n(\tau)$ and  $R_n(\tau)$  are replaced by $Q_n(t)$ and $R_n(t)$ due to the Lemma 2 and the hardening effect, respectively, and Opt 2 becomes a deterministic problem. We define $\mathcal{N}^s(t)$ as the set of users with SRS at the beginning of time $t$ and derive the following algorithm  that solves Opt 2 optimally. 

\textbf{Joint Scheduling and SRS Allocation (JSSA):}
\begin{itemize}
	\item Input: $V$, $C$, $T$, $P$, $\mathcal{N}^s(t)$. 
	\item Step 1.1 (\textbf{SRS Allocation}): At every $t=lT$, \textit{among all users},  for each set with size $k$, $1  \leq k \leq K$ do:
	\begin{itemize}
	 \item Set the transmit power to $p_n(t)=\frac{P_{tot}}{k}$ 
	 \item Find 
			\begin{align*}
	           \mathcal{S}^*_{k}(t) = \argmax-k \left\{  Q_n(t)R_n(t) \right\}, n \in  \mathcal{N}
	        \end{align*}
	        where $ \argmax - k $ choose the first $k$ elements of a  given set of numbers sorted
	        in decreasing order. $\mathcal{S}^*_{k}(t)$ is called as the best set with size $k$. 
	        \item Find the weight  $W^*_k(t) = \sum_{n \in\mathcal{S}^*_{k}} Q_n(t)R_n(t)$
	  \end{itemize}
   \item Step 1.2: Find 
   			\begin{align*}
             k^*(t) = \argmax_{1 \leq k \leq K} W^*_k(t)
            \end{align*}
    \item Step 1.3: Determine the set  $\mathcal{S}^*_{k^*}(t) $    and  $W^*_{k^*} = \sum_{n \in \mathcal{S}^*_{k^*}(t)} Q_n(t)R_n(t)$. Set $\mathcal{S}_1(t) = \mathcal{S}^*_{k^*}(t)$ and the maximum weight $W_1(t) =W^*_{k^*}(t)  $.
    \vspace{2mm}
    \item Step 1.4: Repeat Step 1.1, Step 1.2 and Step 1.3   \textit{only for the users} $n \in  \mathcal{N}^s(t)$ and determine the best and the maximum weight denoted as  $\mathcal{S}_2(t)$ and $W_2(t)$ as Step 1.2 and Step 1.3, respectively.
    
        \item Step 1.5: if     $W_1(t) - \frac{CV}{T} > W_2(t)$, then reconfigure SRS allocation  and assign SRS to the users in $\mathcal{S}_1(t)$ and update $\mathcal{N}^s(t) $.  Otherwise, do not change the SRS configuration.
              
        \item Step 2 (\textbf{Scheduling}): If the condition in Step 1.5 is satisfied, at every $\tau$,  $\tau \in [t, t+T-1]$ schedule the users in $\mathcal{S}_1(t)$, and otherwise, schedule the users in $\mathcal{S}_2(t)$.          
\end{itemize}
JSSA decides on how many and which users must be chosen among all users and determines that it is worth to reconfigure SRS by comparing the performance achieved among the users that have already SRS. As JSSA optimally minimizes the R.H.S of (9) with the hardening effect,  we have the following Theorem that shows performance of JSSA.
\begin{theorem}(Lyapunov Optimization)
	\label{thm:1}
	Suppose $\boldsymbol \lambda$ is an interior point in $\Lambda$, and there
	exits $\boldsymbol \epsilon > 0$ such that $\boldsymbol \lambda + \boldsymbol \epsilon \in
	\Lambda$. Then, under JSSA, we have
	the following bounds:
		\begin{align}
	\limsup_{L\rightarrow \infty} \frac{1}{L} \sum_{l=0}^{L-1}
	\mathbb{E}[C I^s(lT)] \leq C_{avg}^* +  \frac{B_2}{V}\\
		\limsup_{L\rightarrow \infty} \frac{1}{L} \sum_{l=0}^{L-1}
	\sum_{n=1}^{N} \mathbb{E}[Q_n(lT)] \leq  \frac{B_2 + VC}{\epsilon}
	\label{eq:thr1}
	\end{align}
	Where $C_{avg}^*$ is the optimal solution of problem (2)-(3)-(4).
\end{theorem}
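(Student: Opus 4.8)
The plan is to run a two–timescale drift–plus–penalty argument built on Lemma~\ref{lemma:2}. Starting from \eqref{eq:bound3}, I would first use the channel–hardening property to replace $R_n(\tau)$ by $R_n(t)$ for every $\tau\in[t,t+T-1]$, and the arrival statistics to replace $\mathbb{E}\big[\sum_{\tau=t}^{t+T-1}A_n(\tau)\mid\mathbf{Q}(t)\big]$ by $T\lambda_n$. Then the ``Third Term'' becomes exactly $T$ times the (now deterministic) Opt~2 objective, so that
\begin{align*}
\Delta_T(t) + V\,\mathbb{E}[C I^s(t)\mid\mathbf{Q}(t)] \le B_2 + T\sum_{n=1}^{N}Q_n(t)\lambda_n - T\Big(\sum_{n=1}^{N}Q_n(t)R_n(t)-\tfrac{CV}{T}I^s(t)\Big).
\end{align*}
Since JSSA by construction chooses $c_n(t,\tau)\in\mathcal{C}(t,\tau)$ to maximize the Opt~2 objective, the value it attains on the right–hand side is no larger than the value attained by \emph{any} other feasible control policy. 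This is the lever of the proof: I may therefore upper–bound the drift–plus–penalty under JSSA by the same expression evaluated at a conveniently chosen benchmark.

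For the benchmark I would invoke a stationary randomized policy $\Pi^{*}$ whose scheduling and SRS decisions depend only on the (slowly varying) large–scale channel state and are independent of $\mathbf{Q}(t)$. Because $\boldsymbol\lambda+\boldsymbol\epsilon\in\Lambda$, such a $\Pi^{*}$ exists with per–user service guarantee $\mathbb{E}[R_n^{\Pi^*}(t)]\ge \lambda_n+\epsilon$ for all $n$, and --- using the definition of $C_{avg}^{*}$ (together with continuity of the optimal cost in the arrival rate, or letting the slack be taken arbitrarily small) --- can also be taken to have average reconfiguration cost within any $\delta>0$ of $C_{avg}^{*}$. Substituting $\Pi^{*}$ into the maximization and using $\mathbb{E}[CI^{s,\Pi^*}(t)\mid\mathbf{Q}(t)]\le C_{avg}^{*}+\delta$ gives
\begin{align*}
\Delta_T(t) + V\,\mathbb{E}[C I^s(t)\mid\mathbf{Q}(t)] \le B_2 + V C_{avg}^{*} + V\delta - \epsilon T\sum_{n=1}^{N}Q_n(t),
\end{align*}
while replacing the cost term by the trivial $\mathbb{E}[CI^{s,\Pi^*}(t)\mid\mathbf{Q}(t)]\le C$ (valid since $I^{s}\in\{0,1\}$) yields the analogous inequality with $VC$ in place of $VC_{avg}^{*}+V\delta$.

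Both bounds then follow by summing these inequalities over the super–slots $t=0,T,\dots,(L-1)T$, using the law of iterated expectations so the conditional drifts telescope into $\mathbb{E}[L(\mathbf{Q}(LT))]-\mathbb{E}[L(\mathbf{Q}(0))]$, and using $L(\cdot)\ge 0$. For the cost bound I discard the nonnegative queue term, divide by $LV$, take $\limsup_{L\to\infty}$, and let $\delta\downarrow0$; for the queue bound I instead keep the queue term, drop the nonnegative penalty and the nonnegative terminal Lyapunov term, divide by $\epsilon$, and take $\limsup_{L\to\infty}$ (the finite constant $\mathbb{E}[L(\mathbf{Q}(0))]$ vanishing under the $1/L$ averaging). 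I expect the main obstacle not to be this telescoping but the two preparatory steps: (i) making precise that the hardening approximation turns the stochastic bound of Lemma~\ref{lemma:2} into the exact deterministic objective that JSSA optimizes, so that the ``JSSA $\ge\Pi^{*}$'' comparison is legitimate; and (ii) exhibiting a single stationary randomized policy that simultaneously meets the rate margin $\lambda_n+\epsilon$ \emph{and} is cost–optimal --- essentially a Carath\'eodory/time–sharing argument over $\Lambda$ combined with the definition of $C_{avg}^{*}$.
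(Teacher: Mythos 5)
Your proposal is correct and follows essentially the same route as the paper, which only sketches the argument by citing Theorem 5.4 of the Lyapunov-optimization reference: compare JSSA (which maximizes the Opt~2 objective, i.e., minimizes the R.H.S.\ of the Lemma~2 bound under channel hardening) against a queue-independent stationary randomized policy achieving the rate margin $\boldsymbol\epsilon$ and (near-)optimal cost, then telescope the $T$-slot drift-plus-penalty. Your write-up simply fills in the standard details the paper omits (the two benchmark substitutions giving the cost and queue bounds, the iterated-expectation telescoping, and the $\delta\downarrow 0$ / slack limit), and modulo the minor constant ($\epsilon T$ versus $\epsilon$, which only strengthens the queue bound) it matches the paper's intended proof.
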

\begin{proof}
To avoid redundancy with existing literature, we omit the details here. The sketch of the proof is as follows: it follows similar steps in Theorem 5.4 of~\cite{Georgiadis:Resource06} by first showing the existence of  a stationary randomized algorithm   that is optimal and achieves the minimum time average cost by choosing the control actions independently from $\mathbf{Q}(t)$ but according to a fixed probability distribution known to the system. Then, it is shown that JSSA is better than the randomized algorithm in minimizing the R.H.S of (9) and thus it is also optimal. 
\end{proof}
Theorem 1 implies that the average cost under JSSA approaches to  the optimal cost $C_{avg}^*$ as $V$  increase, while the average  queue sizes  also increase.
\vspace{-4 mm}
\section{Numerical Analysis}
\label{sec:sim} In our simulations, there  is a single cell covering a square of 250 m x 250 m area with a Massive MIMO capable BS. We set $M=64$ and there are $N=300$ users uniformly and independently distributed in
the cell. We adapt the same channel models and take the related parameters given in~\cite{massivemimobook} for large-scale  and  NLOS fading. We apply MMSE precoding, and set $B= 20$ MHz, $K=10$ and $P_{tot} =1$ Watts.  We assume that $\tau = 1$ millisecond, and at each time slot $P=60$ users send their SRS to the BS,  and $\gamma=0.8$. 3GPP FTP Model 3 is considered,  where user traffic follows Poisson arrival process  with a  payload size 0f 0.2 MB and different mean arrival rate varying between 0.5 and 2 seconds.

\begin{figure}
	\centering
	\includegraphics[width=0.8\columnwidth]{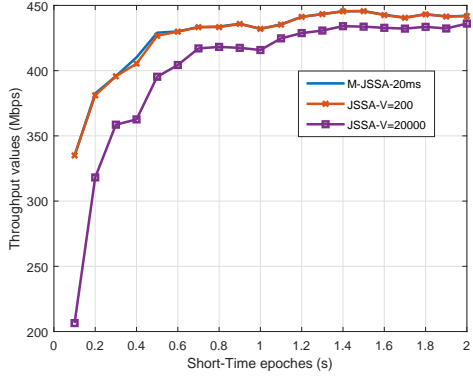}
	\caption{Time avg. total throughput with different $V$. }
	\label{fig:Fig2}
\end{figure}
\begin{figure}[th!]
	\centering
	\includegraphics[width=0.8\columnwidth]{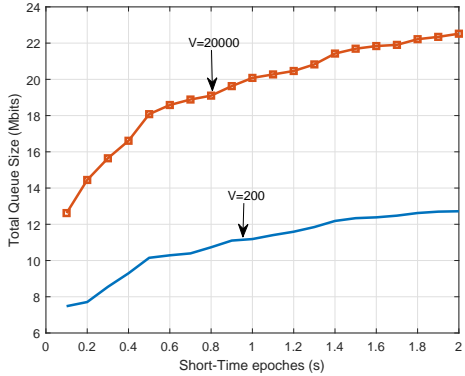}
	\caption{Time  avg. total queue size with different $V$. }
	\label{fig:Fig3}
\end{figure}
We first show the performance of JSSA when  $T=20$ ms. Figure~\ref{fig:Fig2} depicts the time average (every 100 ms) total network throughput achieved by JSSA with different values of $V$.  Modified JSSA (M-JSSA) configures SRS without any cost at every $T$ and it constitutes a benchmark to the performance of JSSA. When $V=200$, JSSA achieves almost the same throughput as that of M-JSSA and the average number of SRS configuration is approximately \textbf{ 0.94}, which implies that more than 90\% of the time JSSA attemps to reconfigure SRS and hence the  cost is high. We observe that when $V=200$ the average total queue size approaches to a fixed point as shown in Figure~\ref{fig:Fig3}, which means the network is stabilized . 

As $V$ increases to $V=20000$, the throughput achieved by JSSA and the benachmark approach to the same value,  which reveals  that the network is still stable. However, the average number of SRS reconfiguration is reduced to \textbf{0.3}, so  the cost decreases. We also observe from Figure~\ref{fig:Fig3} that the average total queue size is higher when $V=20000$ compared to the case with $V=200$, that is aligned with the theoretical result found in Theorem 1.
\vspace{-4 mm}
\section{Conclusion}
\label{sec:conclusion}  We have investigated the problem of SRS allocation and scheduling problem in a single cell Massive MIMO network. By applying Lyapunov optimization tool, we have developed a joint scheduling and SRS allocation algorithm that can perform well under random traffic and channel conditions. In simulation results, we show that the average signaling cost can be reduced at the expense of an increase in the average queue delay. The problems of the SRS allocation with different objectives (i.e., reducing delay)  with fairness can be other research directions. Also, SRS allocation in a multi-cell setup with pilot contamination would be an interesting research problem.

\vspace{-4 mm}
\appendices
\section{Proof of Lemma \ref{lemma:1}}
\label{sec:lemma1}
The proof starts with finding an upper bound for the Lyapunov drift given in \eqref{eq:drift} by using the following fact: for user $n$, the following inequality holds.
\begin{align}
Q_n^2(\tau+1) -  & Q_n^2(\tau)  \leq   R_n^2(\tau) + A_n^2(\tau)\\ \notag
& - 2Q_n(\tau) \left [R_n(\tau) -  A_n(\tau)\right]
\end{align}
By summing (12) over   $[t,t+T-1]$ and knowing that $R_n(\tau) \leq R_{max}$ and $A_n(\tau) \leq A_{max}$  for all $n$ and $\tau$  we obtain,
\begin{align}
Q_n^2(t+T) -  & Q_n^2(t)   \leq   TR^2_{max} + TA^2_{max} \\\notag
& - 2 \left [ \sum_{\tau = t}^{t +T-1}Q_n(\tau)[R_n(\tau) - A_n(\tau)]\right]
\end{align}
Then,  by taking  the conditional expectation of (13) with respect to  $\mathbf{Q(t)}$ and summing over all users, and dividing by  1/2 we have,
\begin{align*}
\Delta_T(t)  \leq   B_1  - \mathbb{E} \left[\sum_{\tau = t}^{t +T-1} \sum_{n=1}^{N} Q_n(\tau)[R_n(\tau) - A_n(\tau) ]   | \mathbf{Q(t)}\right] 
\end{align*}
Finally, we add the penalty term $V \mathbb{E} \left[ I^s(t) | (\mathbf{Q(t)}\right] $ to both sides of  above inequality and rearranging the resulting terms, we have the bound in Lemma 1.
This completes the proof.
\vspace{-3 mm}
\bibliographystyle{IEEEtran}
\bibliography{IEEEabrv,ref}


\end{document}